\newtheorem{definition}{Definition}
\newtheorem{proposition}[definition]{Proposition}
\title{Millions of 5-State $n^3$ Sequence Generators via  Local Mappings}
\author{Tien Thao Nguyen}
\author{Luidnel Maignan}
\affil{Univ Paris Est Creteil, LACL, F-94010 Creteil, France}
\date{}
\newcommand{\stateSet}{{\rm \Sigma}}
\newcommand{\oState}{{\tt \star}}
\newcommand{\qState}{{\tt Q}}
\newcommand{\bState}{{\tt B}}
\newcommand{\sState}{{\tt S}}
\newcommand{\initConfigSet}{{\rm I}}
\newcommand{\initConfigRTSG}{\overrightarrow{\infty}}
\newcommand{\lConfig}[3]{(#1,#2,#3)}
\newcommand{\transTab}{{\rm \delta}}
\newcommand{\fDiagram}{{\rm D}}
\newcommand{\diagram}[4]{\fDiagram_{#1}({#2})({#3},{#4})}
\newcommand\family[1]{\mathrm{D}_{#1}}
\newcommand{\ie}{\emph{i.e.}~}
\newcommand{\lmap}{{\ell}}
\begin{document}

\maketitle

\begin{abstract}
In this paper, we come back on the notion of local simulation allowing to transform a cellular automaton into a closely related one with different local encoding of information.
In a previous paper, we applied it to the Firing Squad Synchronization Problem.
In this paper, we show that the approach is not tied to this problem by applying it to the class of Real-Time Sequence Generation problems.
We improve in particular on the generation of $n^3$ sequence by using local mappings to obtain millions of 5-state solution, one of them using 58 transitions.
It is based on the solution of Kamikawa and Umeo that uses 6 states and 74 transitions.
Then, we explain in which sense even bigger classes of problems can be considered.
\end{abstract}

\section{Introduction}
\label{sec:intro}

\subsection{Local Mappings to Explore the Cellular Solution Space}

This paper is about the formal concepts of \emph{local mapping} and \emph{local simulation} which found their firsts applications in the study of the so-called Firing Squad Synchronization Problem (FSSP) proposed by John Myhill in 1957.
In the latter, the goal is to find a single cellular automaton such that any one-dimensional horizontal array of an arbitrary number of cells \emph{synchronizes}, \ie such that a special state is set for all the cells at the same time.
As explained in~\cite{DBLP:conf/automata/NguyenM20}, there was a race to obtain solutions with as few states as possible, leading in 1987 to a situation where it was established, for minimal-time time solutions, that there were no 4-state solutions, and only one solution had 6-state using a unique strategy.
In 2018, a surprise came when 718 solutions were found using massive computing power, but a bigger surprise came in 2020 when many millions of solutions where discover using ordinary computer power.
It is still not known whether this is a 5 state solution, but the concept used to generate the millions of solutions, local mappings and local simulation still have many things to tell, a story that is not tied to FSSP, but that may lead to some ideas about the famous 5-state FSSP question.
More information can be found in~\cite{DBLP:journals/jca/NguyenM20,DBLP:conf/automata/NguyenM20}.

\subsection{Real-Time Sequence Generation Problems}

To show explicitly in which sense the approach can be applied to other problems, let us focus here on the so-called Real-Time Sequence Generation problems, or RTSG problems for short.
In the latter, given a fixed sequence $S \subseteq \mathbb{N}$, the goal is to find a cellular automaton running on an one-dimensional horizontal array of cells such that the leftmost cell is in a special state exactly when the number of transition $t$ from the beginning belongs to $S$.
A formal description is given later.
In the following, we write $f(n)$ to mean $S = \{ f(n) \mid n \ge 1\}$.

The study of such problems began in 1965 for the sequence of prime numbers, with a description of a cellular automaton algorithm by Fischer~\cite{DBLP:journals/jacm/Fischer65}.
In 1998, Korec~\cite{DBLP:conf/mcu/Korec98} proposed a 9-state solution.
Other sequences where considered in 2007 by Kamikawa and Umeo~\cite{4421122} who gave some different algorithms for the sequences $2^n$, $n^2$, and $3^n$ using one-bit inter-cell-communication cellular automata.
In 2012, Kamikawa and Umeo~\cite{2012JCMSI...5..191K} described the sequence generation powers of CAs having a small number of states, focusing on the CAs with one (only one sequence $n$ of all positive natural numbers), two, and three internal states, respectively. The authors enumerate all of the sequences generated by two-state CAs (linear sequences: $2n, 4n, 3n-1, n, 3n-2, 2n-1, n+1$; non-regular sequences: $2^{n+1}-2, 2^n - 1$) and present several non-regular sequences like $2^n, n^2, 3^n$ that can be generated in real-time by three-state CAs, but not generated by any two-states CA.
In 2016~\cite{DBLP:journals/alr/KamikawaU16}, they gave a construction for the Fibonacci sequence using five-state, followed in 2019~\cite{DBLP:journals/ijnc/KamikawaU19} by two solutions of 8 states and 6 states for the sequence $n^3$.
In these studies, much attention has been paid to the developments of real-time generation algorithms and their small-state implementations on CAs for specific non-regular sequences.
Other complexities are also studied such as the space, communication or state-change complexities.

Here we consider the sequence $n^3$ and provide millions of 5 states solutions using local mappings to improve on the previous 6-state solution.
This reduction from 6-state to 5-state is reminiscent of the FSSP situation.
For the particular $n^3$ sequence, this leaves open the question of the existence of solutions with 3 or 4 states.
We did not try to improve on any other work listed above, the goal here being simply to illustrate the generality of the approach.

\subsection{Organization of the Content}

In Section~\ref{sec:background}, we begin by defining formally cellular automata, local mappings, local simulations, RTSG solutions and related objects in a suitable way for this study.
In Section~\ref{sec:rtsg}, we explain how local mappings can be used firstly to obtain a first 5-state solution to the $n^3$-RTSG problem, and then to generate millions of other solutions.
These other solutions are essentially the same, but differ in the way the local information is encoded, leading to different numbers of transitions for example.
This is in direct comparison with compiler optimization where a program is optimized but stays essentially the same.
We finish this section by making more precise the generality of the approach.
We conclude in Section~\ref{conclusion} with a discussion of some additional aspects of this investigation, in particular with the relation with some topological concepts.

\section{Preliminaries}
\label{sec:background}

We summarize here the formal definitions of cellular automata, local mappings solutions as defined in~\cite{DBLP:conf/automata/NguyenM20}.
More detailed explanations can be found in~\cite{DBLP:conf/automata/NguyenM20,DBLP:journals/jca/NguyenM20}.
We then define RTSG solutions in a formally relevant way for this framework.
As for the other papers, the formal setting is presented for one dimensional cellular automata with usual neighborhood $\{-1,0,+1\}$, but is easily extended to any (non necessarily commutative) group and any (none necessarily fixed) neighborhood.

\subsection{Cellular Automata, Local Mappings, and Local Simulations}

The purpose of these following definitions is to describe cellular automata with partial transition table first directly and then in terms of their deterministic family of space-time diagrams.
With this more explicit representation, the concepts of local mapping and local simulations are more easily understood.
Non necessarily deterministic family of space-time diagrams also plays a role in this story.

\begin{definition}\label{def:ca}\label{def:std}
    A \emph{cellular automaton} $\alpha$ consists of a finite set of \emph{states} $\stateSet_\alpha$, a set of \emph{initial configurations} $\initConfigSet_\alpha \subseteq {\stateSet_\alpha}^\mathbb{Z}$ and a partial function $\transTab_\alpha : {\stateSet_\alpha}^3 \pfun \stateSet_\alpha$ called the \emph{local transition function} or \emph{local transition table}.
    The elements of ${\stateSet_\alpha}^\mathbb{Z}$ are called \emph{(global) configurations} and those of ${\stateSet_\alpha}^3$ are called \emph{local configurations}.
    For any $c \in \initConfigSet_\alpha$, its \emph{space-time diagram} $\fDiagram_\alpha(c) : \mathbb{N} \times \mathbb{Z} \to \stateSet_\alpha$ is defined as:
    \begin{equation*}
        \diagram{\alpha}{c}{t}{p} =
        \begin{cases}
            c(p) & \text{ if } t = 0,\\
            \transTab_\alpha(c_{-1}, c_0, c_1) & \text{ if } t > 0 \text{ with } c_i = \diagram{\alpha}{c}{t-1}{p+i}.
        \end{cases}
    \end{equation*}
    The partial function $\transTab_\alpha$ is required to be such that all space-time diagrams are totally defined.
    When $\fDiagram_\alpha(c)(t,p) = s$, we say that, for the cellular automaton $\alpha$ and initial configuration $c$, the cell at position $p$ has state $s$ at time $t$.
\end{definition}

\begin{definition}\label{family-std}
    A \emph{family of space-time diagrams $D$} consists of a set of states $\stateSet_D$ and an arbitrary set $D \subseteq {\stateSet_D}^{\mathbb{N} \times \mathbb{Z}}$ of space-time diagrams.
    The \emph{local transition relation $\delta_D \subseteq {\stateSet_D}^3 \times \stateSet_D$ of $D$} is defined as:
    $$((c^0_{-1}, c^0_0, c^0_1), c^1_0) \in \delta_D :\iff \exists(d,t,p) \in D \times \mathbb{N} \times \mathbb{Z} \text{ s.t. } c^j_i = d(t+j,p+i).$$
    We call $D$ a \emph{deterministic family} if its local transition relation is functional.
\end{definition}


\begin{definition}\label{def:family-to-ca}
    Given a deterministic family $D$, its \emph{associated cellular automaton} ${\rm \Gamma}_D$ is defined as having the set of states $\stateSet_{{\rm \Gamma}_D} = \stateSet_D$, the set of initial configurations $\initConfigSet_{{\rm \Gamma}_D} = \{ d(0,\--) \in {\stateSet_\alpha}^{\mathbb{Z}}\mid d \in D \}$, and the local transition function $\transTab_{{\rm \Gamma}_D} = \delta_{D}$.
\end{definition}
\begin{definition}
    Given a cellular automaton $\alpha$, its \emph{associated family of space-time diagrams} (abusively denoted) $\family{\alpha}$ is defined as having the set of states $\stateSet_{\family{\alpha}} = \stateSet_\alpha$, and the set of space-time diagrams $\{\,\fDiagram_\alpha(c) \mid c \in \initConfigSet_\alpha \,\}$ and is clearly deterministic.
\end{definition}

\begin{definition}
    \label{def:local-mapping}
    \label{def:generated-family}
    \label{def:local-simulation}
    A \emph{local mapping $\lmap$ from a CA $\alpha$ to a finite set $X$} consists of two functions $\lmap_{\mathtt{z}}: \{ d(0,p) \mid (d, p) \in \family{\alpha} \times \mathbb{Z} \} \to X$ and $\lmap_{\mathtt{s}}: \dom(\transTab_\alpha) \to X$.
    We define its \emph{associated family of diagrams} $\Phi_\lmap = \{ \lmap(d) \mid d \in \family{\alpha} \}$ where:
    \begin{equation*}
        \lmap(d)(t,p) =
        \begin{cases}
            \lmap_\mathtt{z}(d(0,p)) & \text{ if } t = 0,\\
            \lmap_\mathtt{s}(d(t-1,p-1), d(t-1,p), d(t-1,p+1)) & \text{ if } t > 0.
        \end{cases}
    \end{equation*}
    If $\Phi_\lmap$ is deterministic, we say that $\lmap$ is a \emph{local simulation from CA $\alpha$ to CA ${\rm \Gamma}_{\Phi_\lmap}$}.
\end{definition}

\subsection{Real-Time Sequence Generators}
\label{rtsp}

\begin{definition}\label{def:rtsg-candidate}
    A cellular automaton is \emph{RTSG-candidate} if there are four special states $\oState_\alpha, \bState_\alpha, \qState_\alpha, \sState_\alpha \in \stateSet_\alpha$, if $\initConfigSet_\alpha = \{ \initConfigRTSG_\alpha \}$ with $\initConfigRTSG_\alpha$ being the \emph{RTSG initial configuration} infinite to the right, i.e. $\initConfigRTSG_\alpha(p) = \oState_\alpha$, $\bState_\alpha$ and $\qState_\alpha$ if $p$ is respectively $p \le 0$, $p = 1$ and $p \ge 2$. 
    Moreover, $\oState_\alpha$ must be the \emph{outside state}, \emph{i.e.} for any $(c_{-1},c_0,c_1) \in \dom(\transTab_\alpha)$, we must have $\transTab(c_{-1},c_0,c_1) = \oState_\alpha$ if and only if $c_0 = \oState_\alpha$.
    Also, $\qState_\alpha$ must be a \emph{quiescent state} so $\transTab_\alpha(\qState_\alpha,\qState_\alpha,\qState_\alpha) = \transTab_\alpha(\oState_\alpha, \qState_\alpha,\qState_\alpha) = \qState_\alpha$.
\end{definition}

The $\oState_\alpha$ state is not really counted as a state since it represents cells that should be considered as non-existing.
Therefore, an RTSP-candidate cellular automaton $\alpha$ will be said to have $s$ states when $|\stateSet_\alpha \setminus \{\oState_\alpha\}|\; = s$, and $m$ transitions when $|\dom(\transTab_\alpha) \setminus\stateSet_\alpha\times\{\oState_\alpha\}\times\stateSet_\alpha| = m$.

\begin{definition}\label{rtsg_solution}
Given a sequence $S \subseteq \mathbb{N}$, a RTSG-candidate cellular automaton $\alpha$ is a \emph{$S$-RTSG solution} if for any time $t$, $\fDiagram_\alpha(\initConfigRTSG_\alpha)(t,0) = \sState_\alpha$ if and only if $t \in S$.
\end{definition}

\begin{figure}[t]
    \centering
    \includegraphics[width=.66\linewidth]{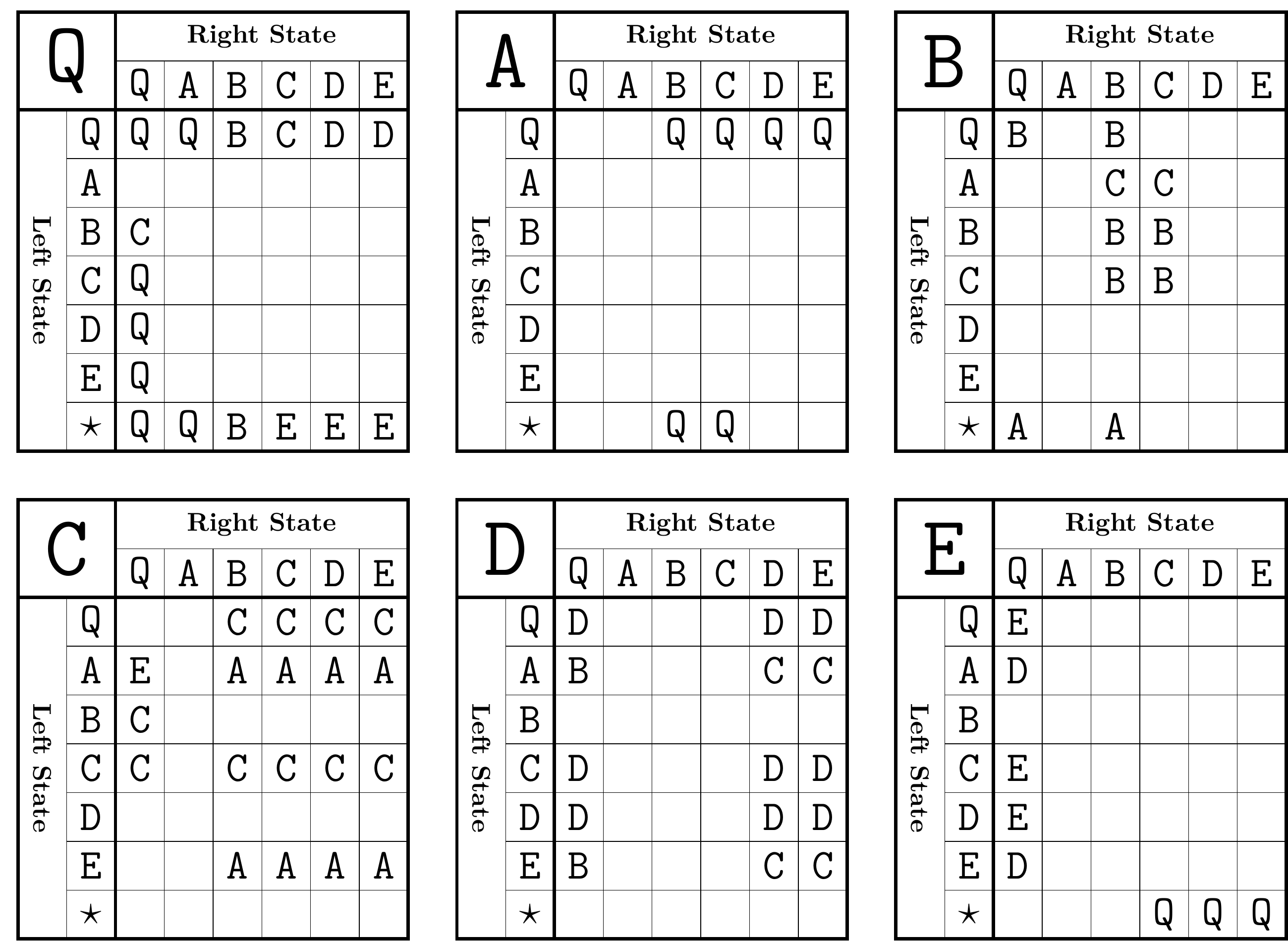}
    \caption{Transition table of Kamikawa and Umeo's 6-state solution using 74 transitions.}
    \label{fig:6state}
\end{figure}
\begin{proposition}\label{prop:umeo-6-state}
   There is a $n^3$-RTSG solution using 6 states and 74 transitions.
\end{proposition}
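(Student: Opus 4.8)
The plan is to exhibit an explicit cellular automaton and verify that it meets Definitions~\ref{def:rtsg-candidate} and~\ref{rtsg_solution}. The automaton is the one whose partial transition table is displayed in Figure~\ref{fig:6state}, due to Kamikawa and Umeo: it has a state set $\stateSet_\alpha$ of size six (counting the outside state $\oState_\alpha$), the RTSG initial configuration $\initConfigRTSG_\alpha$, and the $74$ listed transitions. First I would confirm the candidate conditions: that $\stateSet_\alpha$ indeed contains the four distinguished states $\oState_\alpha, \bState_\alpha, \qState_\alpha, \sState_\alpha$, that $\oState_\alpha$ behaves as the outside state (\ie reading off the table, an entry produces $\oState_\alpha$ exactly when its center argument is $\oState_\alpha$), and that $\qState_\alpha$ is quiescent in the two required local configurations. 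These are finite syntactic checks on the entries of the table.

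The substance of the proof is the correctness statement of Definition~\ref{rtsg_solution}: that $\diagram{\alpha}{\initConfigRTSG_\alpha}{t}{0} = \sState_\alpha$ holds precisely for the cubes $t = n^3$. My approach would be a signal analysis of the deterministic space-time diagram $\fDiagram_\alpha(\initConfigRTSG_\alpha)$. The governing idea is that consecutive cubes differ by $(n+1)^3 - n^3 = 3n^2 + 3n + 1$, a quantity whose own consecutive differences are the linear expressions $6n + 6$; this second-difference structure suggests a construction with a primary signal that marks the leftmost cell together with secondary (reflected) signals whose round-trip times implement the quadratic growth of the gaps. Concretely, I would aim to identify a finite repertoire of \emph{signal} patterns in the diagram, formulate an inductive invariant describing the configuration of the array at the instant of the $n$-th visit of $\sState_\alpha$ to cell $0$, and prove that this invariant is preserved by $\transTab_\alpha$ and forces the next visit exactly $3n^2 + 3n + 1$ steps later.

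The main obstacle is precisely this invariant: pinning down the exact geometry of the interacting signals and showing that the delay between successive firings is $3n^2 + 3n + 1$ while ruling out any spurious occurrence of $\sState_\alpha$ at cell $0$ in between. I expect this to demand a careful case analysis that tracks several signals simultaneously through their collisions, which is exactly where the specific choice of the $74$ transitions is used. As a practical supplement, and since the family $\family{\alpha}$ is deterministic with a single initial configuration, I would first verify the claim by direct simulation up to a large bound, both to gain confidence and to extract the candidate invariant, reserving the inductive argument above for the infinitely many firings. Alternatively, one may simply appeal to the original correctness proof of Kamikawa and Umeo~\cite{DBLP:journals/ijnc/KamikawaU19}, from which this table is taken.
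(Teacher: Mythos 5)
Your closing sentence is, in essence, the paper's entire proof: the paper exhibits the Kamikawa--Umeo transition table of Figure~\ref{fig:6state}, remarks that the obvious outside-state entries for $\oState$ are included, and defers the proof of correctness to~\cite{DBLP:journals/ijnc/KamikawaU19}. So your proposal is correct, but only by virtue of that final alternative; everything before it should be read as a plan rather than a proof. The signal-analysis program you sketch (second differences $3n^2+3n+1$, reflected signals implementing the quadratic gaps, an inductive invariant at each firing) is a fair description of how such correctness proofs go, but you never actually produce the invariant, and you acknowledge yourself that pinning it down and ruling out spurious occurrences of $\sState$ at cell $0$ is the real work. Had you omitted the appeal to~\cite{DBLP:journals/ijnc/KamikawaU19}, this would be a genuine gap: a stated intention to find an invariant is not a proof, and simulation up to a finite bound cannot certify the infinitely many firings. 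One practical point the paper makes that you would need if you reconstructed the table yourself: the published table in~\cite{DBLP:journals/ijnc/KamikawaU19} contains an error (the column for state $C$ in one subtable is wrongly filled with the content of the column for $E$), which must be caught by cross-checking against the original paper's proofs and space-time diagrams before the finite syntactic checks you describe (outside state, quiescence, the count of $74$ transitions) can be carried out on a correct table.
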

\begin{proof}
    In Figure~\ref{fig:6state} is the solution of Kamikawa and Umeo, reproduced with the same format as their paper to ease comparison.
    Also, the local transition function $\transTab$ contains the above entries and additional obvious entries for the outside state $\star$.
    The proof of correction can be found in~\cite{DBLP:journals/ijnc/KamikawaU19}.
\end{proof}
The space-time diagram of this solution is depicted up in the two left columns of Figure~\ref{fig:rtsg_dgm}, where the cell at position $0$ has the state $A$ at time $1$, $8$, $27$, and $64$ as expected.
In~\cite{DBLP:journals/ijnc/KamikawaU19}, note that table of $D$ wrongly has column $C$ filled with the content of column $E$.
This mistake is easy to catch by examining the proofs and space-time diagrams of the paper.

\section{Exploring RTSG Solutions and More via Local Mappings}
\label{sec:rtsg}

Let us now describe how to apply the same techniques used for the FSSP in~\cite{DBLP:conf/automata/NguyenM20} in order to first obtain a first optimization from 6-state to 5-state, and then use the exploration algorithm to generate millions of other 5-state solutions.
The first step is to study those local mappings which \emph{complies} with RTSG problems.

\subsection{Compliant Local Mappings}

Given two cellular automata $\alpha$ and $\beta$, a local mapping between them associates to each triplet found in a space-time  diagram $d$ at position $p$ and time $t$ of $\alpha$ to the state found in the associated diagram $d'$ at position $p$ and time $t+1$.
When both of these cellular automata are RTSG solutions, this implies the following properties on the local mapping.

\begin{definition}\label{rtsg_compliant}
   A local mapping $\lmap$ from an RTSG solution $\alpha$ to the states $\stateSet_\beta$ of an RTSG-candidate CA $\beta$ is said to be \emph{RTSG-compliant} if it is such that
   (0) $\lmap_\mathtt{z}$ maps $\oState_\alpha$, $\bState_\alpha$, and $\qState_\alpha$ respectively to $\oState_\beta$, $\bState_\beta$, and $\qState_\beta$,
   (1) $\lmap_\mathtt{s}\lConfig{c_{-1}}{c_0}{c_1} = \oState_\beta$ if and only if $\delta_\alpha\lConfig{c_{-1}}{c_0}{c_1} = \oState_\alpha$ (meaning simply $c_0 = \oState_\alpha$),
   (2) $\lmap_\mathtt{s}\lConfig{\oState_\alpha}{c_0}{c_1} = \sState_\beta$ if and only if $\delta_\alpha\lConfig{\oState_\alpha}{c_0}{c_1} = \sState_\alpha$, and
   (3) $\lmap_\mathtt{s}\lConfig{\qState_\alpha}{\qState_\alpha}{\qState_\alpha} = \lmap_\mathtt{s}\lConfig{\oState_\alpha}{\qState_\alpha}{\qState_\alpha} = \qState_\beta$.
\end{definition}

\begin{proposition}\label{prop:conservation}
   Given a sequence $S \subseteq \mathbb{N}$, let $\alpha$ be an $S$-RTSG solution CA, $\beta$ an $S$-RTSG-candidate CA and $\lmap$ a local simulation from $\alpha$ to $\beta$.
   $\beta$ is an $S$-RTSG solution if and only if $\lmap$ is RTSG-compliant.
\end{proposition}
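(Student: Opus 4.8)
The plan is to reduce the whole statement to the comparison of two uniquely determined space-time diagrams. Since $\alpha$ is an $S$-RTSG solution it has the single initial configuration $\initConfigRTSG_\alpha$, so $\family{\alpha} = \{d\}$ with $d := \fDiagram_\alpha(\initConfigRTSG_\alpha)$, and consequently $\Phi_\lmap = \{d'\}$ is a singleton with $d' := \lmap(d)$. Because $\lmap$ is a local simulation from $\alpha$ to $\beta = {\rm \Gamma}_{\Phi_\lmap}$, the family $\Phi_\lmap$ is deterministic, and a straightforward induction on $t$ using the definition of $\delta_{\Phi_\lmap}$ yields the structural identity $d' = \fDiagram_\beta(d'(0,\--))$: the image diagram is literally the run of $\beta$ from its own initial configuration. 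Moreover $\beta$ is RTSG-candidate in both directions, so its unique initial configuration is $\initConfigRTSG_\beta$; evaluating $d'(0,p) = \lmap_\mathtt{z}(d(0,p))$ at $p \le 0$, $p = 1$ and $p \ge 2$ then forces $\lmap_\mathtt{z}$ to send $\oState_\alpha,\bState_\alpha,\qState_\alpha$ to $\oState_\beta,\bState_\beta,\qState_\beta$, which is exactly clause (0). Thus (0) holds under the standing hypotheses, before any appeal to $\beta$ being a solution.

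Next I would fix the geometry shared by $d$ and $d'$ and read off clauses (1) and (3). By the outside-state clause for $\alpha$, the cells of $d$ at positions $p \le 0$ carry $\oState_\alpha$ at every time and no other cell ever does; transporting the outside-state clause of $\beta$ through the identity above gives the same description of $\oState_\beta$ in $d'$. For any local configuration $\lConfig{c_{-1}}{c_0}{c_1}$ realised in $d$ at some $(t,p)$ one has $\lmap_\mathtt{s}\lConfig{c_{-1}}{c_0}{c_1} = d'(t+1,p)$, which equals $\oState_\beta$ exactly when $p \le 0$, i.e.\ exactly when $c_0 = d(t,p) = \oState_\alpha$; this is clause (1). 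Likewise the mandated quiescent triples $\lConfig{\qState_\alpha}{\qState_\alpha}{\qState_\alpha}$ and $\lConfig{\oState_\alpha}{\qState_\alpha}{\qState_\alpha}$ are pinned to $\qState_\beta$ by the quiescent-state clause of $\beta$ (modulo the recurrence subtlety discussed below), giving clause (3). Hence (0), (1) and (3) are consequences of the hypotheses alone, and the content of the proposition reduces to the single equivalence ``$\beta$ is an $S$-RTSG solution $\iff$ clause (2) holds''.

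The heart of the argument is a computation at the \emph{output cell}, the leftmost cell not lying in the permanently-$\oState_\alpha$ region; its left neighbour stays $\oState_\alpha$ for all times, and it is the cell whose state the criterion of Definition~\ref{rtsg_solution} observes. For $t > 0$ its state in $d$ is $\delta_\alpha\lConfig{\oState_\alpha}{c_0}{c_1}$ and its state in $d'$ is $\lmap_\mathtt{s}\lConfig{\oState_\alpha}{c_0}{c_1}$, for one and the same triple $\lConfig{\oState_\alpha}{c_0}{c_1}$ read off $d$ at time $t-1$; and as this cell ranges over time these triples exhaust exactly the configurations with outside left neighbour and non-outside centre, which are precisely those constrained by clause (2). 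Hence clause (2) holds if and only if the output cell of $d'$ carries $\sState_\beta$ at exactly those times at which the output cell of $d$ carries $\sState_\alpha$ (the base case $t=0$ being handled by (0): there the cell is $\bState_\beta \neq \sState_\beta$, and $0 \notin S$ because $\alpha$ is a solution). Since $\alpha$ is an $S$-RTSG solution the output cell of $d$ carries $\sState_\alpha$ exactly on $S$, and by the identity $d' = \fDiagram_\beta(\initConfigRTSG_\beta)$ the output behaviour of $d'$ is precisely $\beta$'s own run. Therefore clause (2) is equivalent to ``$\beta$ is an $S$-RTSG solution'', and together with the previous paragraph this proves that $\lmap$ is RTSG-compliant if and only if $\beta$ is an $S$-RTSG solution.

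The step I expect to be the main obstacle is the bookkeeping that matches the clauses (1)--(3) --- which quantify over the whole domain $\dom(\transTab_\alpha) = \dom(\lmap_\mathtt{s})$ --- with the behaviour of the single infinite diagram, which only exhibits the configurations actually visited during the run. For clause (2) this is harmless, since the relevant triples $\lConfig{\oState_\alpha}{c_0}{c_1}$ are exactly those realised at the output cell. The delicate cases are the outside triples and the mandated quiescent triples $\lConfig{\qState_\alpha}{\qState_\alpha}{\qState_\alpha}$ and $\lConfig{\oState_\alpha}{\qState_\alpha}{\qState_\alpha}$, which may fail to recur away from the initial configuration: one must argue that they are nonetheless forced to their required images by the RTSG-candidate structure of $\beta$ (so that clauses (1) and (3) are genuine consequences of the hypotheses and not extra demands), and carefully justify the induction establishing the structural identity of the first paragraph. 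Once that correspondence is secured, the firing equivalence is immediate.
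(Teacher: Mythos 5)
Your overall strategy is the same as the paper's --- transport the unique diagram $d$ of $\alpha$ through $\lmap$, identify $\lmap(d)$ with the run of $\beta$ from $\initConfigRTSG_\beta$, and match each compliance clause against Definitions~\ref{def:rtsg-candidate} and~\ref{rtsg_solution} --- but you carry it out with considerably more precision than the paper does: the structural identity $\lmap(d)=\fDiagram_\beta(\initConfigRTSG_\beta)$, the derivation of clause (0) from candidacy alone, and the reduction of the equivalence to clause (2) at the output cell are all correct and are exactly what the paper leaves implicit. The sufficiency direction (compliant $\Rightarrow$ solution) is complete as you present it, since there only the triples actually realised in $d$ matter.

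The genuine gap is the one you yourself flag in your last paragraph, and your proposed repair cannot work. The value of $\lmap_\mathtt{s}$ on an entry of $\dom(\transTab_\alpha)$ that is never realised in $d$ has no influence whatsoever on $\Phi_\lmap=\{\lmap(d)\}$, hence none on $\beta={\rm \Gamma}_{\Phi_\lmap}$; so nothing about $\beta$ --- neither its candidate structure nor its being a solution --- can ``force'' such an entry to its required image. Concretely, $\lConfig{\oState_\alpha}{\qState_\alpha}{\qState_\alpha}$ must lie in $\dom(\transTab_\alpha)$ by candidacy of $\alpha$, yet it may never occur in $d$ (the leftmost cell need never be quiescent with a quiescent neighbour); redefining $\lmap_\mathtt{s}$ on it arbitrarily leaves $\beta$ unchanged, hence a solution, while destroying clause (3) --- and the same trick breaks clauses (1) and (2) on any other unrealised entry. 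Worse, even when that triple \emph{is} realised at some $(t,1)$, its image $d'(t+1,1)$ is computed on the $\beta$ side from the triple $\lConfig{\oState_\beta}{d'(t,1)}{d'(t,2)}$, which need not equal $\lConfig{\oState_\beta}{\qState_\beta}{\qState_\beta}$; so $\beta$'s quiescent clause pins nothing, and your ``pinned by the quiescent-state clause of $\beta$'' argument is circular there, presupposing the very preservation of $\qState$ that clause (3) asserts. Of the two mandated triples only $\lConfig{\qState_\alpha}{\qState_\alpha}{\qState_\alpha}$ is genuinely pinned, via time $0$ and clause (0). To be fair, the paper's own proof waves this same point away (``clearly necessary since they perfectly match the definitions''); but closing the hole needs an extra hypothesis --- e.g.\ that $\dom(\transTab_\alpha)$ contains only triples realised in $d$, or that clauses (1)--(3) are read as quantified over realised triples --- not a cleverer argument about $\beta$.
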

\begin{proof}
    To see this, consider the diagram $d \in \fDiagram_\alpha$ of the solution $\alpha$.
    The special RTSG states appear at specific places and $\lmap$ ensures or witnesses, depending on the direction of the implication considered, that these special states/places are conserved in $\lmap(d) \in \fDiagram_\beta$, (Definition~\ref{def:local-mapping}).
    Indeed, condition (0) is just about the initial configuration, condition (1) is about the conservation of the outside state, condition (2) is about the conservation of the special generation state for the leftmost cell only and condition (3) about the conservation of the quiescent state behaviour.
    These conditions are sufficient to ensure and $\beta$ is a solution, and clearly necessary since they perfectly match Definitions~\ref{def:rtsg-candidate} and~\ref{rtsg_solution} of the problem.
\end{proof}

Note that once $\alpha$ fixed, $\beta$ can be reconstructed from $\lmap$, and $\lmap$ from $\beta$.
So the local mapping $\lmap$ is just another representation of the RTSG-candidate $\beta$ that it generates (see Definition~\ref{def:local-simulation}), but it is much easier to check the compliance of the local mapping than the correction of CA $\beta$ as an RTSG solution, and this is the key property than justifies this particular application of local mappings.

\subsection{A Hand-Crafted Local Simulation}

\begin{figure}[t]
\includegraphics[width=.4\textwidth]{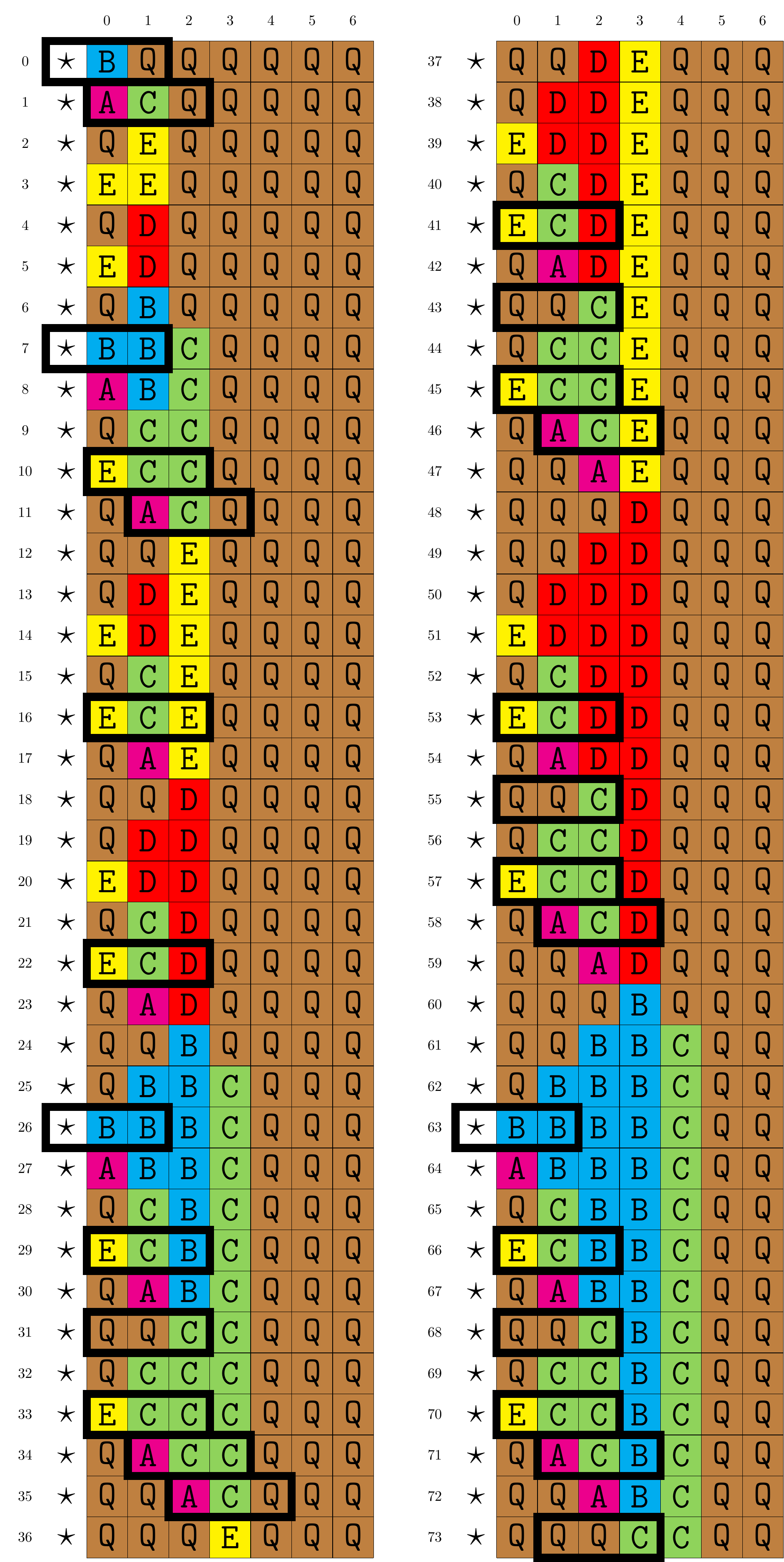}
\hspace{.2em}
\raisebox{4em}{\includegraphics[width=.15\textwidth]{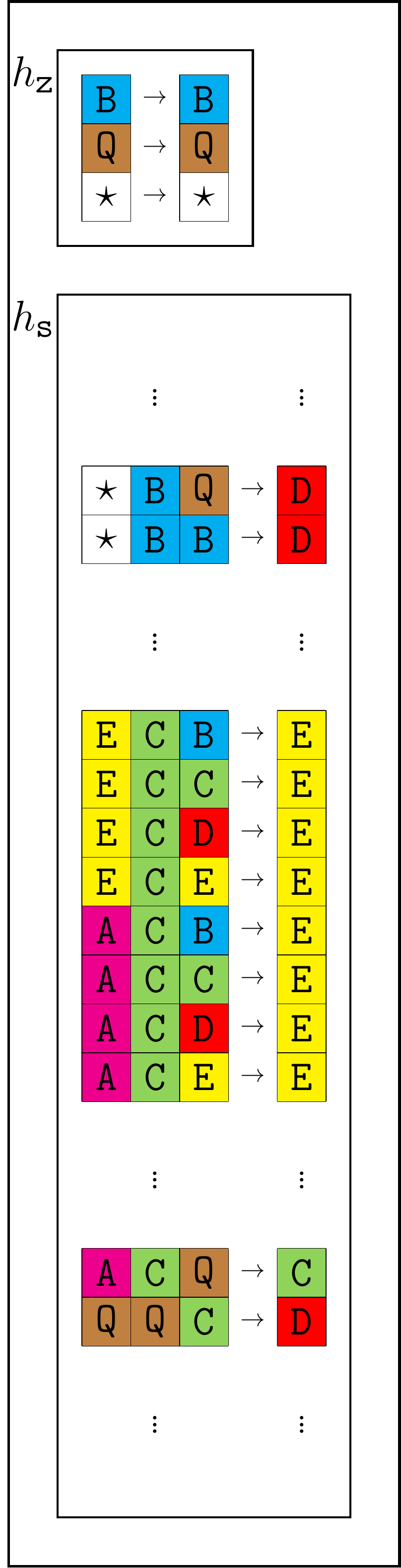}}
\hspace{.2em}
\includegraphics[width=.4\textwidth]{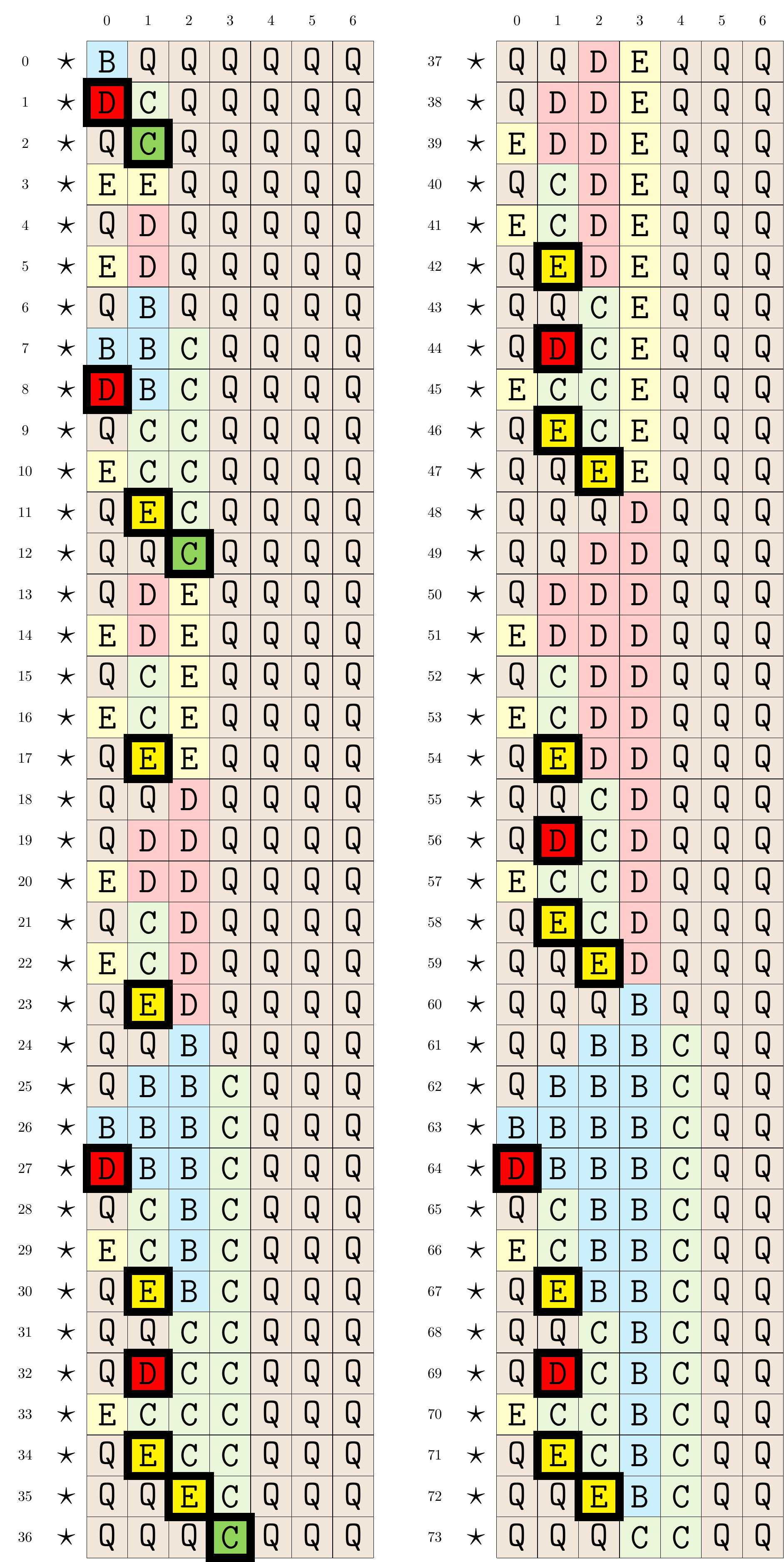}
\caption{6-state diagram, hand-crafted local mapping and resulting 5-state diagram.}
\label{fig:rtsg_dgm}
\end{figure}

The first local mapping that we consider the \emph{identity local mapping} $id$ given by the local transition function of the 6-state solution of Proposition~\ref{prop:umeo-6-state}.
This local mapping simply transforms this solution into itself.
The point here is that we can now work with the local mapping.
However, the reader should be careful to clearly distinguish modifications made the $id$ local mapping, and the resulting modifications in the transition table.
It is easier to think in terms of space-time diagram, since each modification in the local mapping corresponds directly to a uniform set of modifications in the space-time diagram which may or may not be deterministic after these modifications.

Let us now describe how the second, hand-crafted, local mapping is obtained, as we come back on the process itself later.
It is build by noticing different features of the original space-time diagram on the left of Figure~\ref{fig:rtsg_dgm}.
The first thing is that the state $A$ is not often used, so we can try to remove it entirely.
This means changing every entry $(x,y,z)$ of $id_\mathtt{s}$ such that $id_\mathtt{s}(x,y,z) = A$.
But since $A$ is the special generating state, we can not replace it by $B$, $Q$ or $E$ since they already appear in the evolution of the leftmost cell.
So we can consider either $C$ or $D$.
However, looking at time $1$, we see changing $A$ into $C$ would lead to a $CCQ$ local configuration, which is already used.
So we heuristically choose $D$ instead, to have $DCQ$ at time 1, an unused local configuration.
To summarize, for the leftmost cell we choose to change $A$ by $D$, and for the other cells, we can choose any state \emph{a priori}.

The second local mapping is thus obtained by taking every local configurations $(x,y,z)$ of $id_\mathtt{s}$ such that $id_\mathtt{s}(x,y,z) = A$, and setting them to $D$ if $x = \star$, and to $E$ otherwise.
The result is not a deterministic space-time diagram, but this is easily corrected with two additional modifications for $ACQ$ and $QQC$, leading to the local mapping depicted in the center of Figure~\ref{fig:rtsg_dgm}.
The space-time diagram on the right is obtained by applying the local mapping on the space-time on the left as indicated by the outlined local configuration on the left, and resulting state on the right, at the following timestep in direct application of Definition~\ref{def:local-mapping}.

\begin{proposition}
    There is a $n^3$-RTSG solution using 5 states and 72 transitions.
\end{proposition}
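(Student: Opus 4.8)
The plan is to exhibit an explicit local simulation $\lmap$ from the $6$-state solution $\alpha$ of Proposition~\ref{prop:umeo-6-state} to a $5$-state RTSG-candidate $\beta$, to check that $\lmap$ is RTSG-compliant, and then to invoke Proposition~\ref{prop:conservation} to conclude that $\beta$ is an $n^3$-RTSG solution; the state and transition counts are read off from $\beta$ at the end. By the remark following Proposition~\ref{prop:conservation}, working with $\lmap$ is equivalent to working with $\beta$, so the whole argument reduces to verifying two things about $\lmap$: that its associated family $\Phi_\lmap$ is deterministic, so that $\lmap$ is indeed a local simulation and $\beta := {\rm \Gamma}_{\Phi_\lmap}$ is well defined (Definition~\ref{def:local-simulation}), and that conditions (0)--(3) of Definition~\ref{rtsg_compliant} all hold.

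First I would fix $\lmap$ precisely. Take $\lmap_\mathtt{z}$ to be the identity on the initial states $\oState$, $\bState$, $\qState$, which matches condition (0) by construction and makes $\initConfigRTSG_\beta$ have the required form, so that $\beta$ is an RTSG-candidate with $\oState_\beta = \oState$, $\bState_\beta = \bState$, $\qState_\beta = \qState$. For $\lmap_\mathtt{s}$, start from the identity local mapping $id$ of $\alpha$ and apply exactly the modifications described above: every triplet $\lConfig{x}{y}{z}$ with $id_\mathtt{s}\lConfig{x}{y}{z} = A$ is sent to $D$ when $x = \oState$ and to $E$ otherwise, the two triplets $ACQ$ and $QQC$ are reassigned the corrective values needed below, and on all remaining triplets $\lmap_\mathtt{s}$ agrees with $id_\mathtt{s}$. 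Since no modification ever outputs $\oState$ and none changes the output on a triplet centered at $\oState$, the outside behaviour of $\alpha$ is transported verbatim, giving condition (1); the quiescent triplets $\lConfig{\qState}{\qState}{\qState}$ and $\lConfig{\oState}{\qState}{\qState}$ are untouched, giving condition (3). The state $A$ no longer occurs in any output, so $\stateSet_\beta \setminus \{\oState\} = \{B,C,D,E,Q\}$ has exactly $5$ elements, with $D$ playing the role of the generating state $\sState_\beta$.

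The main obstacle is the determinism of $\Phi_\lmap$, \ie the functionality of the relation $\delta_{\Phi_\lmap}$. Because the state set is finite, only finitely many local configurations occur in the single diagram $\lmap(d)$ of $\beta$, so this is a finite verification: one enumerates the triplets realised in $\lmap(d)$ together with their successors and checks that equal triplets have equal successors. Collapsing $A$ onto $D$ and $E$ is exactly what can break functionality, by identifying two originally distinct neighbourhoods that now present the same triplet while still demanding different successors; the choice of $D$ rather than $C$ at the leftmost cell was already made so that $DCQ$ stays distinct from the used $CCQ$, and the residual clashes are precisely the ones repaired by the two corrective entries at $ACQ$ and $QQC$. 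Confirming that after these two corrections no clash remains is the crux of the proof and is where I expect the real effort to lie; it is nonetheless a bounded check over the finitely many neighbourhoods produced by the $n^3$ generator.

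It then remains to establish compliance condition (2) and to count transitions. For (2) one checks, over the finitely many leftmost neighbourhoods $\lConfig{\oState}{c_0}{c_1}$, that $\lmap_\mathtt{s}\lConfig{\oState}{c_0}{c_1} = D$ holds precisely when $\transTab_\alpha\lConfig{\oState}{c_0}{c_1} = A$: the ``if'' direction is immediate from the construction, and the ``only if'' direction uses that the leftmost cell of $\alpha$ never outputs $D$, a fact read directly from the table of $\alpha$ and exactly what makes $D$ a safe new generating state. With all of (0)--(3) established and $\Phi_\lmap$ deterministic, Proposition~\ref{prop:conservation} yields that $\beta$ is an $n^3$-RTSG solution. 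Finally, a direct enumeration of $|\dom(\transTab_\beta) \setminus \stateSet_\beta \times \{\oState\} \times \stateSet_\beta|$ gives $72$, the net decrease of two transitions from Umeo's $74$ arising from the collapse of state $A$ and the merging of neighbourhoods it induces, which completes the proof.
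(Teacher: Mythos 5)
Your proposal takes essentially the same route as the paper's proof: build the hand-crafted local mapping ($A \mapsto D$ at the leftmost cell, $A \mapsto E$ elsewhere, with the two corrective entries at $ACQ$ and $QQC$), verify that the associated family of diagrams is deterministic and that the mapping is RTSG-compliant, conclude via Proposition~\ref{prop:conservation}, and read off the 5 states and 72 transitions from the resulting table. It is correct, and merely spells out in prose the finite verifications that the paper delegates to Figures~\ref{fig:rtsg_dgm} and~\ref{fig:handcrafted}.
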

\begin{proof}
    First note that the right space-time diagram is deterministic.
    We can therefore extract the transition table given in Figure~\ref{fig:handcrafted} from it.
    No additional transitions appear after the space-time shown in Figure~\ref{fig:rtsg_dgm}.
    To prove this CA to be an $n^3$-RTSG solution, it is enough to check that the local mapping is RTSG-compliant, and since the source CA is an $n^3$-RTSG solution, we can conclude using Proposition~\ref{prop:conservation}.
\end{proof}

\begin{figure}[t]
    \begin{center}
        \includegraphics[width=\linewidth]{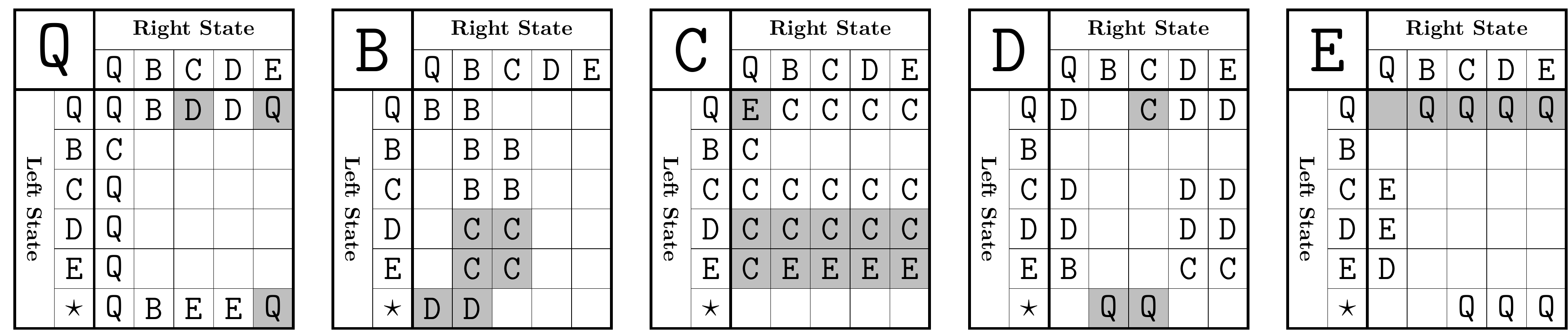}
    \end{center}
    \caption{Transition table of the hand-crafted 5-state solution using 72 transitions.}
    \label{fig:handcrafted}
\end{figure}

To ease the comparison of this 5-state solution with the original 6-state solution, the transitions that are different, added or removed are highlighted in the above table.
Of course, all transition containing $A$ should be considered as removed.
The reader can check that these differences do not correspond exactly to those described in the local mapping.

\subsection{Optimizing Through Millions of Solutions}

Now that we have a first 5-state solution, we are ready to generate millions of them.
The generated solutions are essentially the same, but can have fewer states or/and a different number of transitions.
We begin by a brief summary of the algorithm (more details in~\cite{DBLP:conf/automata/NguyenM20}) and then examine the results.

\subsubsection{The Exploration Algorithm}

The exploration algorithm is related to the hand-crafted process above.
In our case, we make it begin with an identity local mapping of the 5-state hand-crafted solution and let it explore many of its possible modifications by applying one modification at a time.
Only compliant modifications are considered.
In fact, this is a graph exploration algorithm, the node of this graph being compliant local mappings from the hand-crafted 5-state solution to a fixed set of 5 states (and an outside $\star$ state).
The neighbors of a local mapping $\lmap$ are all the local mapping obtained by exactly one compliant modification on $\lmap$.
The identity local mapping is obviously a compliant local simulation, and the algorithm generates all its neighbors and add to the ``remaining tasks'' queue any neighbor that is also a compliant simulation.
Continuing in this way with the content of the queue, the algorithm explores the complete connected component of compliant simulations.
By Proposition~\ref{prop:conservation}, all these compliant local simulations are $n^3$-RTSG solutions.
Let us describe two additional ingredients.

The first one is that there is an initialization step.
To check that a local mapping $\lmap$ is a local simulation, we need generate its local transition relation $\transTab_{\Phi_\lmap}$ to check if it is a function or not.
This is easy to do for all the local mappings if we first collect all the \emph{super-local transition} of the hand-crafted solution, \ie all quintuplets of states with their resulting triplet of states appearing anywhere in the space-time diagram of the hand-crafted solution.
From these data, and for any local mapping $\lmap$, it is enough to apply $\lmap_\mathtt{s}$ on all the super local transitions to generate the entries of associated local transition relation $\transTab_{\Phi_\lmap}$.

The second ingredient is that a parameter $k$ to allow the discovery of more compliant simulation connected component.
Indeed, with $k=0$, the algorithm is unchanged and a compliant local simulation is reached only if its modifications can be applied one at a time while leading to compliant local simulation all the way through.
With $k \ge 1$, the algorithm randomly apply $k$ additional modifications simultaneously on any given compliant local simulation.
If fact, it is often the case that many modifications need to be applied simultaneously, for example the two last modifications described in the design of the hand-crafted solution.
So there is clearly room for improvement in the algorithm in this regard.

\subsubsection{Generated Solutions and Optimizations}

\begin{figure}[t]
    \centering
    \includegraphics[width=.4\textwidth]{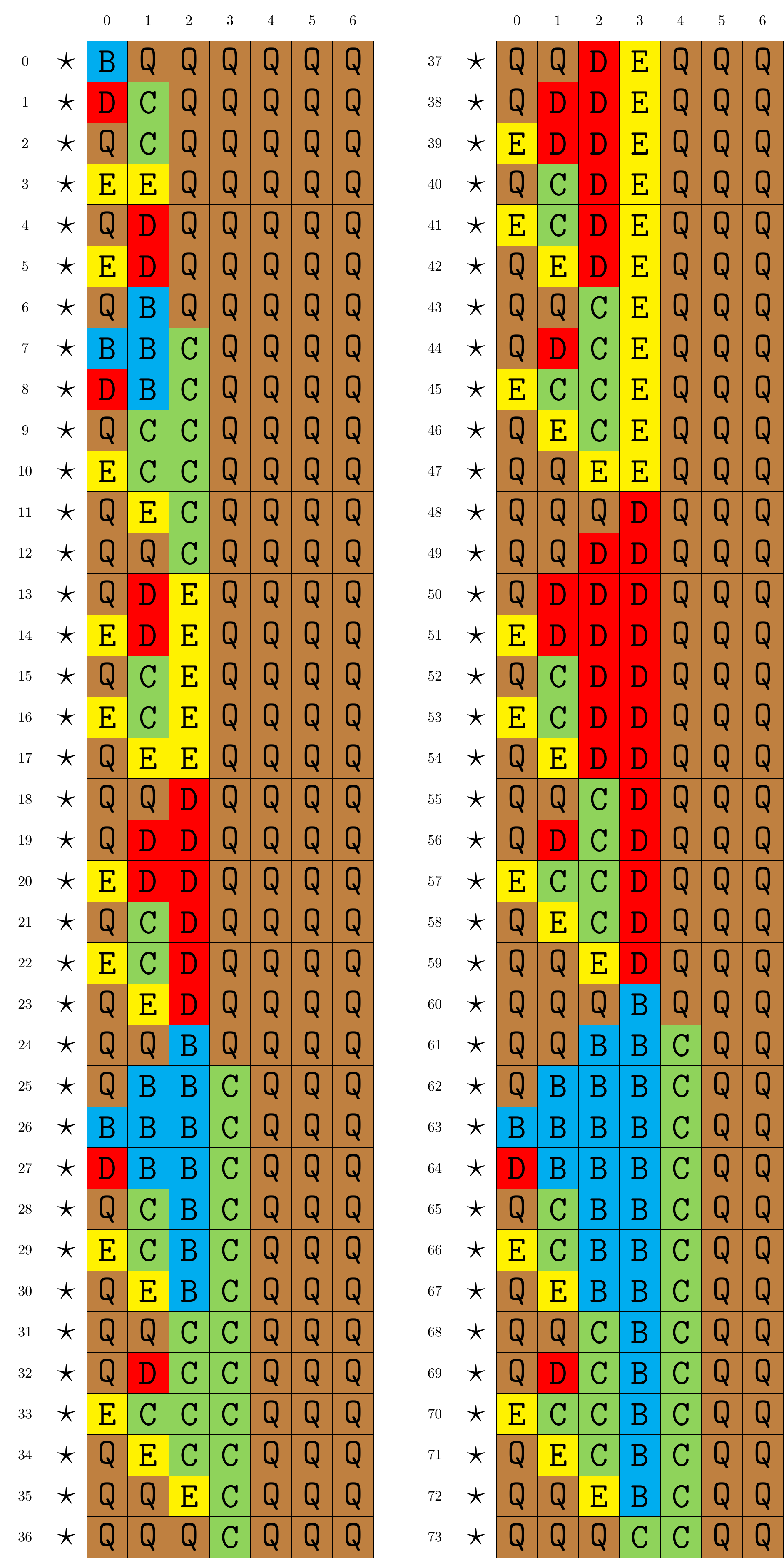}
    \hspace{5em}
    \includegraphics[width=.4\textwidth]{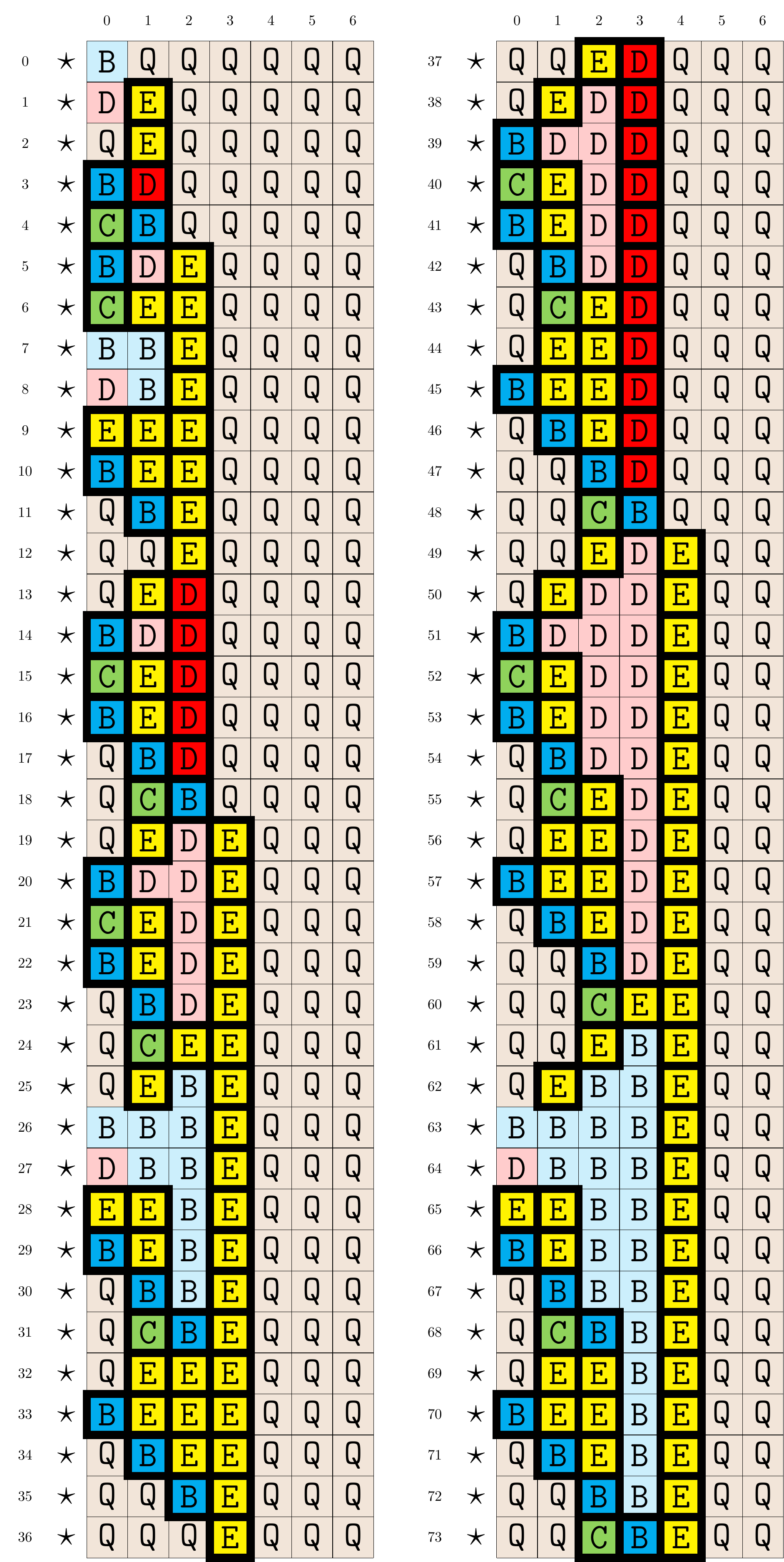}
    \caption{Hand-crafted 5-state diagram and optimized 5-state diagram using 58 transitions.}
\end{figure}

Running the algorithm on a 32 cores of 2.00GHz machine having 126Gb of memory, we obtain so many solutions that the algorithm stops because it runs out of memory resource.
The first time, we ran the algorithm with $k = 0$.
The program actually uses 2 cores and about 43 Gb of memory.
We did not optimize the program nor did we check the configuration of the Java Virtual Machine for this Java implementation.
Since the machine is shared, the following data are not really reproducible, but gives an idea of the execution.
\begin{itemize}
    \item after 1 days, about 15 millions local simulations.
    \item after 6 days, about 85 millions local simulations.
    \item after 20 days, about 90 millions local simulations.
\end{itemize}
The number of solutions found each day was steady for the 6 firsts days then dropped, presumably because of memory issues.
Running concurrently the program with $k = 2$, it uses 2 cores and 36 Gb of memory before it stops because of the same lake of memory.
\begin{itemize}
    \item after 1 days, about 15 millions local simulations.
    \item after 6 days, about 70 millions local simulations.
    \item after 20 days, about 74 millions local simulations.
\end{itemize}
In fact, we had to keep in memory all the solutions and check whether we obtain new solutions up to permutations, in order to be able to have a total number of generated solutions.
Better strategies can be found if the goal is only to optimize the solution.

\begin{proposition}
   There are at least 90,000,000 $n^3$-RTSG solutions using  5 states.
\end{proposition}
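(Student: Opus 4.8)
The plan is to reduce this counting claim to the guarantees already established for compliant local simulations and then to the recorded output of the exploration algorithm. First I would fix $\alpha$ to be the hand-crafted 5-state $n^3$-RTSG solution obtained above, and observe that every node visited by the algorithm is, by construction, a compliant local simulation from $\alpha$ to a fixed ambient set of five states together with the outside state $\oState$. Since $\alpha$ is an $n^3$-RTSG solution and each visited node is RTSG-compliant in the sense of Definition~\ref{rtsg_compliant}, Proposition~\ref{prop:conservation} applies verbatim and yields that the associated CA $\beta = {\rm \Gamma}_{\Phi_\lmap}$ is again an $n^3$-RTSG solution. Thus each visited node certifies one solution, and the whole difficulty is transferred to counting how many \emph{genuinely distinct} solutions these nodes certify.

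Second, I would establish pairwise distinctness using the remark following Proposition~\ref{prop:conservation}: once $\alpha$ is fixed, $\beta$ is reconstructible from $\lmap$ and $\lmap$ from $\beta$, so the correspondence $\lmap \mapsto \beta$ is a bijection and distinct local mappings give distinct RTSG-candidate CAs on the fixed five-element state set. Two such CAs may nonetheless coincide after a renaming of the five states, so to count \emph{solutions} rather than \emph{labelled representations} I would quotient by the action of the state permutations and count orbits. The lower bound is then the number of orbits produced by one run, which I would read off from the algorithm's bookkeeping: it stores every discovered solution and tests each newcomer against the stored set up to permutation, discarding duplicates before incrementing the tally.

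Third, I would confine attention to orbits whose representatives actually use all five states, since Definition~\ref{def:rtsg-candidate} counts $s = |\stateSet_\beta \setminus \{\oState_\beta\}|$ and the ambient set only guarantees \emph{at most} five states; any orbit collapsing onto fewer states is separated out. With this restriction, the reported $k=0$ run enumerated at least $90{,}000{,}000$ pairwise non-isomorphic 5-state $n^3$-RTSG solutions, which is exactly the stated bound.

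The hard part will be that this is an intrinsically computational result, so no pen-and-paper derivation can replace the run itself; the crux is to be certain of two facts about the implementation. First, that the compliance predicate tested by the program is precisely the RTSG-compliance of Definition~\ref{rtsg_compliant}, so that Proposition~\ref{prop:conservation} legitimately applies to \emph{every} stored node — this relies on the super-local-transition initialization correctly reconstructing the local transition relation $\transTab_{\Phi_\lmap}$. Second, that the up-to-permutation deduplication is both sound and complete, i.e. no two counted solutions are related by a state bijection and no genuine solution is over-counted. Making (i) and (ii) rigorous amounts to verifying the canonicalization used for the permutation test and the determinism check, and this verification — rather than any further mathematics — is where the real work lies.
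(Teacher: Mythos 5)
Your proposal matches the paper's own (implicit) justification: the paper offers no formal proof for this proposition, relying exactly as you do on the computational run of the exploration algorithm, with each visited node certified as an $n^3$-RTSG solution via compliance and Proposition~\ref{prop:conservation}, and with distinctness ensured by the up-to-permutation deduplication the authors describe (they likewise note that no 4-state solutions appeared, so all counted solutions genuinely use 5 states). Your added care about the bijection $\lmap \leftrightarrow \beta$, orbit counting under state permutations, and the fact that the real burden lies in trusting the implementation is a faithful, slightly more rigorous reconstruction of the same argument.
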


Among these millions of solutions, no 4-state solutions are found, but 32379 of them have fewer transitions.
In the following table, the first line indicates a number of transition and the second line the number of solutions having this number of transitions.

\begin{table}[h]
\begin{tabular}{c|c|c|c|c|c|c|c|c|c|c|c|c|c}
58 & 59 & 60 & 61 & 62 & 63 & 64 & 65 & 66 & 67 & 68 & 69 & 70 & 71 \\
\hline
1 & 7 & 22 & 51 & 98 & 174 & 336 & 589 & 1044 & 1618 & 2696 & 4643 & 7671 & 13429
\end{tabular}
\end{table}

\begin{proposition}
   There is a $n^3$-RTSG solution using 5 states and 58 transitions.
\end{proposition}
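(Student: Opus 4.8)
The plan is to proceed exactly as for the hand-crafted 72-transition solution, the only difference being that the target CA is now the specific one produced by the exploration algorithm as the unique solution attaining 58 transitions, displayed as the right-hand, optimized space-time diagram shown just above this statement. First I would take as source CA $\alpha$ the hand-crafted 5-state solution, which is already established to be an $n^3$-RTSG solution, and take $\lmap$ to be the local mapping from $\alpha$ onto a fixed 5-state set (plus the outside state $\oState$) that the algorithm records for this particular node of its search graph. The whole argument then reduces to checking that $\lmap$ is an RTSG-compliant local simulation and invoking Proposition~\ref{prop:conservation}, so that no correctness argument about the $n^3$ sequence itself need be reproduced.

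Concretely, I would carry out the following finite checks. (i) Verify that the associated family $\Phi_\lmap$ is deterministic, \ie that its local transition relation $\transTab_{\Phi_\lmap}$ is functional; by the initialization step of the algorithm this amounts to applying $\lmap_\mathtt{s}$ to every super-local transition of $\alpha$ and confirming that no clash arises. (ii) Read off the resulting transition table $\transTab_{\Phi_\lmap}$ from the optimized diagram and confirm that no new local configurations appear beyond the portion shown, so that the extracted table is complete. (iii) Verify the four compliance conditions (0)--(3) of Definition~\ref{rtsg_compliant}: that $\lmap_\mathtt{z}$ fixes $\oState$, $\bState$, $\qState$; that $\oState$ is produced exactly on the columns where $\alpha$ has it; that the generation state $\sState$ is produced on the leftmost cell exactly when $\alpha$ produces it; and that the quiescent behaviour is preserved. (iv) Count the entries of $\transTab_{\Phi_\lmap}$ outside the outside-state transitions, using the counting convention fixed after Definition~\ref{def:rtsg-candidate}, and confirm the total is exactly $58$. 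Once (i)--(iv) hold, Proposition~\ref{prop:conservation} yields at once that the target CA is an $n^3$-RTSG solution with $5$ states and $58$ transitions.

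The main obstacle is not conceptual but one of bookkeeping: the real content of the statement is the existence of so compact a table, which was found computationally rather than derived, so the delicate part is the transition count itself. One must apply the counting convention carefully, discarding every triple of the form $\lConfig{c_{-1}}{\oState}{c_1}$ and not inflating the count, and one must be certain that the displayed diagram exhibits every reachable local configuration so that no transition is silently omitted. Establishing determinism in step (i) is routine given the super-local transition data, but it is precisely where an undetected clash would invalidate the appeal to Proposition~\ref{prop:conservation}; for that reason I would verify it first and machine-check the functionality of $\transTab_{\Phi_\lmap}$ rather than rely on visual inspection of the space-time diagram.
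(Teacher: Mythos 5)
Your proposal is correct and follows essentially the same route as the paper: exhibit the generated transition table, recover the local mapping from the hand-crafted 5-state solution to the generated CA, check that it is an RTSG-compliant local simulation, and invoke Proposition~\ref{prop:conservation} so that no direct correctness argument about the $n^3$ sequence is needed. Your added detail on the finite checks (determinism via the super-local transitions, completeness of the extracted table, the compliance conditions, and the transition count) merely makes explicit what the paper leaves as ``a matter of checking.''
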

\begin{proof}
    The transition table of the generated solution is shown in Figure~\ref{fig:optimized}.
    The local mapping having 45 entries different from the identity local mapping, it is not very practical to display it as the previous one, but it is possible to reconstruct it from both cellular automaton.
    It is then a matter of checking that it is compliant and apply Proposition~\ref{prop:conservation} to conclude as before.
\end{proof}

\begin{figure}
    \centering
    \includegraphics[width=\textwidth]{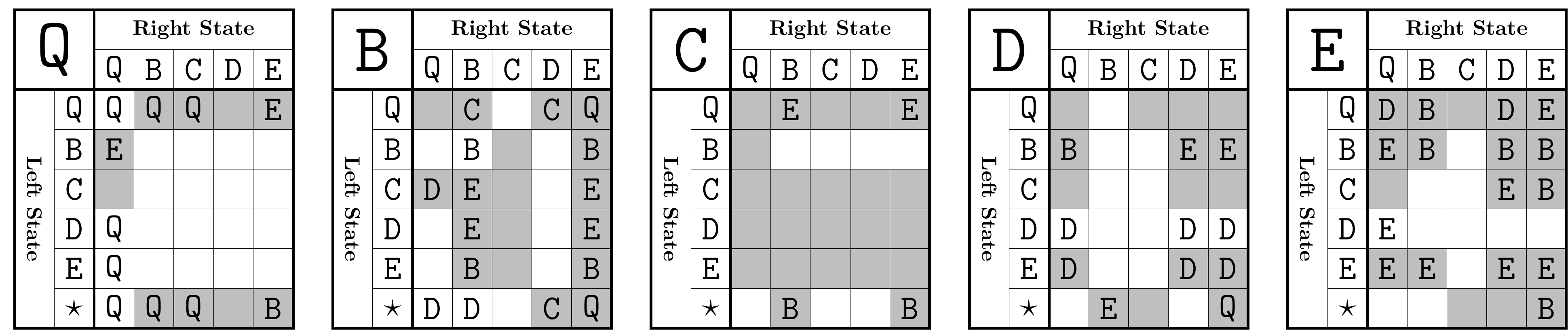}
    \caption{Transition table of the generated 5-state solution using 58 transitions.}
    \label{fig:optimized}
\end{figure}

\subsection{Beyond RTSG and FSSP Optimizations}

It should be clear by now that the approach can be applied to a large class of problems.
For example, the same algorithm used here for the $n^3$-RTSG problem is not particular to the $n^3$ sequence and can be used for any sequence $S$, as clearly indicated in the definitions and propositions above.
Also, the slightly differently parameterized algorithm for the minimal-time FSSP is not particular to minimal-time solutions and can be used for any synchronization time, without even specifying this synchronization time to the algorithm.
The difference in the parameter only reflects the slightly different notion of compliance for RTSG problems and FSSP.
Because the notion of compliance is the only changing factor, the approach can readily be adapted to any class of problem for which an appropriate notion of compliance can be designed.
As examplified here, and in the FSSP case described in~\cite{DBLP:conf/automata/NguyenM20}, the compliance property is a direct translate of the problem.

\section{Conclusion}
\label{conclusion}

There are still many components of this work to communicate properly, including how local mappings compose and relate to each other and how the integration of non-deterministic family of space-time diagrams can allow to explore even more (deterministic) solutions.
Beginning these discussions in this conclusion is not necessarily useful.
There is nonetheless one aspect on which we should comment.
The notion of local mapping appears to be a bridge between a common practice and a topological tool.
Indeed, on the practical side, it is common to work directly at the level of space-time diagrams, and this is practice that is captured formally, and only partly, by local mappings.
This allows to automate this practice.
On the other hand, a question was raised about the relation with conjugacy classes, a standard notion in the cellular automata and symbolic dynamics literature~\cite{DBLP:journals/iandc/JalonenK20}.
In fact, the concept of local mapping appears to be an adaptation of the notion of shift-equivariant homomorphism between two cellular automaton.
Such homomorphisms are usually described on total transition functions, with any configuration being a valid initial configuration.
This is a dynamical system point of view not necessarily aligned with the more algorithmic point of view of FSSP and RTSG problems.
Local mappings augment the notion of homomorphism by including the partiality of the transition functions and the temporal aspect of the space-time diagrams, essential for the very specification of many algorithmic problem.
Forming a bridge between the algorithmic and dynamical points of view might be the reason of their effectiveness.

\bibliographystyle{plainurl}
\bibliography{reference}

\end{document}